\documentclass[11pt]{amsart}

\usepackage{amssymb,amsmath,amsthm}
\usepackage[hidelinks=true]{hyperref}

\setlength{\oddsidemargin}{0.5cm}
\setlength{\evensidemargin}{0.5cm}
\setlength{\textwidth}{15cm}
\setlength{\topmargin}{-0.0cm}
\setlength{\textheight}{21cm}

\theoremstyle{definition}
\newtheorem{dfn}{Definition}[section]
\newtheorem{thm}{Theorem}[section]
\newtheorem{pro}{Proposition}[section]
\newtheorem{exa}{Example}[section]
\newtheorem{cla}{Claim}[section]

\newcommand{\eg}{\textit{e.g.}}
\newcommand{\ie}{\textit{i.e.}}

\newcommand*{\markdef}[1]{{\normalfont\textbf{#1}}}
\newcommand*{\seq}[1]{\langle #1 \rangle}
\newcommand*{\R}{\mathbb{R}}
\newcommand*{\Z}{\mathbb{Z}}
\newcommand*{\N}{\mathbb{N}}
\newcommand*{\C}{\mathbb{C}}
\newcommand*{\E}{\mathbb{E}}

\DeclareMathOperator{\id}{id}

\begin{document}

\title{\bf Iterated Function Systems with Economic Applications}
\author{Shilei Wang}
\address{Universit\`{a} Ca' Foscari di Venezia, Venice, Italy}
\email{shilei.wang@unive.it}
\date{September, 2012}

\begin{abstract}
This work's purpose is to understand the dynamics of some social systems whose properties can be captured by certain iterated function systems. To achieve this intension, we start from the theory of iterated function systems, and then we study two specific economic models on random utility function and optimal stochastic growth.
\end{abstract}

\maketitle

\section{Iterated Function Systems}

\subsection{Discrete Dynamical Systems}

The transition rule in a discrete dynamical system can be determined solely by a single function mapping the state space into itself. Let the state space $X$ be a metric space with the metric $d:X\times X\to\R_+$, and let the time domain be $T=\Z$.

\begin{dfn}
A \markdef{discrete dynamical system} on $X$ is a pair $(X,f)$, where $f:X\to X$ is a continuous function, such that $x_{n+1}=f(x_n)$ for all $x_n,x_{n+1}\in X$ and all $n\in\Z$.
\end{dfn}

If we consider a state $x$, by the iterated transition rule, we can have a forward series of states as $f(x)$, $f(f(x))$, $f(f(f(x)))$, and so on, and also a backward series of states as $f^{-1}(x)$, $f^{-1}(f^{-1}(x))$, $f^{-1}(f^{-1}(f^{-1}(x)))$, and so on. To make our notations concise enough, we define $f^n$ in a recursive way, say $f^n=f^{n-1}\circ f$ for $n\in\Z$, and $f^0=\id_X$, where $\id_X$ is the identity function on $X$. We have a discrete motion passing through $x$ being
\[
\dotsc,f^{-n}(x),\dotsc,f^{-1}(x),x,f(x),\dotsc,f^n(x),\dotsc,
\]
where the transition rule between the time $m$ with a state $x\in X$ and the time $m+n$ is $f^n(x)$, for any $m,n\in\Z$.

The trajectory passing through $x$ is
\[
\gamma(x)=\bigcup_{n\in\Z}\{f^n(x)\},
\]
and similarly the positive and negative semi-trajectories are
\[
\gamma^+(x)=\bigcup_{n\in\Z_+}\{f^n(x)\},\ \text{and}\ \gamma^-(x)=\bigcup_{n\in\Z_-}\{f^n(x)\}.
\]
We know that $\gamma^+(x)\cup\gamma^-(x)=\gamma(x)$, and $\gamma^+(x)\cap\gamma^-(x)\supseteq\{x\}$. If $\gamma(x)=\{x\}$ or equivalently $f(x)=x$, we call $x$ is an equilibrium state, or in terms of functions, a fixed state.

A set of states $S\subseteq X$ is said to be invariant if $f(S)=S$, where $f(S):=\{f(x):x\in S\}$. The $\omega$-limit set of a state $x$ is $\omega(x)$, which contains all the $\omega$-limit states of $x$. If $\omega(x)\neq\emptyset$, then it must be invariant, say $f\big(\omega(x)\big)=\omega(x)$.

A state $x$ is said to be periodic, if there exists a $k\in\Z_+$ such that $f^k(x)=x$. The period $p$ of $\gamma^+(x)$ for any periodic state $x$ is defined as the minimal integer $k$ such that $f^k(x)=x$. 

If the period of $\gamma^+(x)$ is $1$, then $f(x)=x$, so $x$ is a fixed state. If the period of $\gamma^+(x)$ is $p<+\infty$, then $\gamma^+(x)=\{x,f(x),\dotsc,f^p(x)\}$. A state $x$ is said to be finally periodic, if there exists an $m\in\Z_+$ such that $f^n(x)$ is a periodic state for all $n\geq m$, or equivalently $f^{n+p}(x)=f^n(x)$ for all $n\geq m$, where $p$ is the period. $x$ is said to be asymptotically periodic, if there exists a $y\in X$ such that $\lim_{n\to+\infty}d\big(f^n(x),f^n(y)\big)=0$.

A set of states $A\subseteq X$ is called an attractor, if there exists a neighborhood $N(A,\varepsilon)$ of $A$ such that $f(N(A,\varepsilon))\subseteq N(A,\varepsilon)$, and
\[
\omega\big(N(A,\varepsilon)\big)=\bigcap_{n\in\Z_+}f^n\big(N(A,\varepsilon)\big)=A.
\]

Now consider a compact state space $X\subseteq\R$, we can state the following theorem that any periodic patterns can be implied by the existence of a period $3$ trajectory.

\begin{thm}
If $f:X\to X$ is continuous where $X$ is an interval in $\R$, and there is a trajectory of period $3$, namely there are three states $x,y,z\in X$ such that $f(x)=y$, $f(y)=z$, and $f(z)=x$, then there exist trajectories of period $n$, for any $n\in\N$.
\end{thm}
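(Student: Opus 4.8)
\emph{Proof sketch.} The plan is to reduce everything to a one-dimensional combinatorial fact about how $f$ maps compact subintervals of $X$ across one another, and then to manufacture a trajectory of each prescribed period by forcing a state to follow a suitable itinerary of intervals. The only analytic ingredient is the intermediate value theorem.

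I would first record two lemmas. \emph{(L1)} If $[\alpha,\beta]\subseteq X$ is a compact interval with $f([\alpha,\beta])\supseteq[\alpha,\beta]$, then $f$ has a fixed state in $[\alpha,\beta]$: choose $u,v\in[\alpha,\beta]$ with $f(u)=\alpha$ and $f(v)=\beta$, so that $f(u)-u\le 0\le f(v)-v$, and apply the intermediate value theorem to the continuous map $t\mapsto f(t)-t$. \emph{(L2)} If $I,J\subseteq X$ are compact intervals with $f(I)\supseteq J$, then some compact subinterval $K\subseteq I$ satisfies $f(K)=J$: take preimages in $I$ of the two endpoints of $J$, then shrink to the interval between the last moment $f$ leaves one endpoint and the first subsequent moment it reaches the other; a short argument with the intermediate value theorem shows $f$ carries that interval exactly onto $J$. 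Combining these, given a finite chain of compact intervals $I_0,I_1,\dots,I_{m-1}$ with $I_m:=I_0$ and $f(I_k)\supseteq I_{k+1}$ for each $k$, I would work backwards with (L2) to obtain compact $K_k\subseteq I_k$ with $f(K_k)=K_{k+1}$ for $k\le m-2$ and $f(K_{m-1})=I_0\supseteq K_0$; then $f^m(K_0)\supseteq K_0$, so (L1) applied to the continuous map $f^m$ on $K_0$ produces a state $x$ with $f^k(x)\in I_k$ for $0\le k\le m$ and $f^m(x)=x$. Call this the itinerary lemma.

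Next I would extract the geometry of the period-$3$ trajectory. Since its period is $3$, the states $x,y,z$ are distinct; relabel them as $a<b<c$ and put $I_0=[a,b]$, $I_1=[b,c]$. Depending on which of the two cyclic orderings the $3$-cycle realizes, the intermediate value theorem gives either $f(I_0)\supseteq I_1$ together with $f(I_1)\supseteq I_0\cup I_1$, or the mirror relations with $I_0$ and $I_1$ interchanged; the two cases are symmetric, so I treat the first. Period $1$ is immediate from (L1) and $f(I_1)\supseteq I_1$; period $2$ follows from the itinerary lemma applied to the chain $I_1\to I_0\to I_1$, the resulting state not being fixed since that would put it in $I_0\cap I_1=\{b\}$, contrary to $f(b)=c$; and period $3$ is the hypothesis. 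For $n\ge 4$, I would apply the itinerary lemma to the chain made of $n-1$ successive copies of $I_1$ followed by one copy of $I_0$ (returning to $I_1$), a legal chain because $f(I_1)\supseteq I_1$, $f(I_1)\supseteq I_0$, and $f(I_0)\supseteq I_1$. This yields a state $x$ with $x,f(x),\dots,f^{n-2}(x)\in I_1$, $f^{n-1}(x)\in I_0$, and $f^n(x)=x$.

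The crux, and the step I expect to absorb most of the care, is checking that this $x$ has period \emph{exactly} $n$ rather than a proper divisor. The lever is that $I_0\cap I_1=\{b\}$, while $b$ lies on the $3$-cycle and so its orbit reaches $a\notin I_1$ after two steps. Concretely, if the period $p$ of $x$ divided $n$ with $p\le n/2$, then with $r=(n-1)\bmod p\le n-3$ one gets $f^{r}(x)=f^{n-1}(x)\in I_0\cap I_1=\{b\}$, hence $f^{r+2}(x)=a\notin I_1$; this contradicts the prescribed itinerary unless $r=n-3$, which forces $p\ge n-2$ and therefore $n=4$, $p=2$, and that last residual case collapses too because it would make $x$ an endpoint of the $3$-cycle whose image is incompatible with the already-deduced $f(x)=b$. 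Hence $x$ has period $n$, and a trajectory of period $n$ exists for every $n\in\N$. The only fiddly bookkeeping is in this final paragraph — indices reduced modulo $p$ interacting with the endpoint coincidence $I_0\cap I_1=\{b\}$ — while the rest is a routine unwinding of the intermediate value theorem.
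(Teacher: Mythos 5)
Your argument is correct and is precisely the standard covering-interval proof (the lemmas L1, L2 and the itinerary construction, plus the endpoint analysis at $I_0\cap I_1=\{b\}$ to pin down the exact period) that appears in Devaney's Theorem 10.1 and in Li--Yorke, which is exactly what the paper cites in lieu of giving a proof. So you have in effect supplied, correctly, the argument the paper omits; no gaps beyond the routine details you already flag.
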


\begin{proof}
See the proof of Theorem $10.1$ in Devaney \cite{devaney86}.
\end{proof}

In general, this result is a special case of \v{S}arkovskii's theorem. Define the \v{S}arkovskii's ordering of all the natural numbers by $>_S$, if there is a trajectory of period $n$, then there also exist trajectories of period $m$, where $n>_S m$. Since $3$ is the first natural number in \v{S}arkovskii's ordering, and all the powers of $2$ in the order
\[
\dotsb >_S 2^n>_S 2^{n-1}>_S \dotsb >_S 2^2 >_S 2>_S 1
\]
are the last group of numbers in \v{S}arkovskii's ordering, a period $3$ trajectory implies all the other possible periods, and if we have a period not a power of $2$, then we should have all the periods of the powers of $2$.

The discrete dynamical system $(X,f)$ is chaotic, if
\begin{enumerate}
\item for any $x\in X$ and $\varepsilon>0$, there exist $\delta>0$ and $n\in\Z_+$ such that $d\big(f^n(x),f^n(y)\big)>\varepsilon$ for all $y\in N(x,\delta)$,
\item for any pair of open sets of states $S_1$ and $S_2$, there exists an $n\in\Z_+$ such that $f^n(S_1)\cap S_2\neq\emptyset$.
\end{enumerate}
So a chaotic discrete dynamical system means that its dynamics is sensitively dependent on the initial states, and its states are transitive.

We also want to give a definition of chaos in the sense of Li and Yorke \cite{li75}, in which $X\subseteq\R$ is compact, and $f:X\to X$ is of class $C^0$.

\begin{dfn}
A discrete dynamical system $(X,f)$ is said to be \markdef{non-periodically chaotic}, if there exists an uncountable set $S\subseteq X$ such that
\begin{enumerate}
\item $\limsup_{n\to+\infty}d\big(f^n(x),f^n(y)\big)>0$, and $\liminf_{n\to+\infty}d\big(f^n(x),f^n(y)\big)=0$ for all $x,y\in S$ with $x\neq y$,
\item for all $x,z\in S$ with $z$ is periodic, $\limsup_{n\to+\infty}d\big(f^n(x),f^n(z)\big)>0$.
\end{enumerate}
\end{dfn}

This definition of chaos is slightly weaker than the definition we give before. If a discrete dynamical system on $X\subseteq\R$ is chaotic, then it must be also non-periodically chaotic. On the other hand, if a discrete dynamical system is non-periodically chaotic, it is not necessary to be chaotic. Notice that the condition (ii) in the definition of non-periodic chaos means that there is no asymptotically periodic state in $X$. Suppose $u$ is a state not asymptotically periodic, so $\omega(u)$ has infinitely many states, and there exists a unique minimal invariant set $V\subseteq\omega(u)$, such that $V=\omega(v)$ having infinitely many states, where $v\in X$. Consider the set of states $U=X\setminus V$, then $f^n(V)\cap U=\emptyset$ for all $n\in\Z_+$, so it is not transitive, which hence means that it is not chaotic.

\subsection{Iterated Function Systems}\label{sec3.2}

In this section, we consider a set of contractive functions defined on the state space $X$ with a metric $d:X\times X\to\R_+$, in which a function $f:X\to X$ is said to be contractive if there exists a $\lambda\in(0,1)$ such that $d\big(f(x),f(y)\big)\leq\lambda d(x,y)$ for all $x,y\in X$. In the following paragraphs, we denote $I_N$ as an index set with $N$ elements, where $N\geq 2$ is finite.

\begin{dfn}
Let $F$ be a collection of $N\geq 2$ continuous contractive functions, say $F=\{f_i:i\in I_N\}$ where $f_i:X\to X$. Then the pair $(X,F)$ is said to be an \markdef{iterated function system}, if $(X,f_i)$ is a discrete dynamical system for each $i\in I_N$.
\end{dfn}

Suppose $(X,d)$ is compact, and $\mathcal{Q}(X)$ is the set of all nonempty compact subsets of $X$. For any $U,V\in\mathcal{Q}(X)$, the Hausdorff metric between $U$ and $V$ is
\[
d_H(U,V)=\sup_{u\in U,v\in V}\{d(u,V),d(v,U)\},
\]
where $d(x,Y):=\inf_{y\in Y}d(x,y)$ is the metric between a state $x\in X$ and a nonempty set $Y\subseteq X$. Then $\mathcal{Q}(X)$ with the Hausdorff metric $d_H:\mathcal{Q}(X)\times\mathcal{Q}(X)\to\R_+$ is a compact metric space.

Consider a mapping $H:\mathcal{Q}(X)\to\mathcal{Q}(X)$ such that for any $B\in\mathcal{Q}(X)$, we have \[
H(B)=\bigcup_{i\in I_N}f_i(B),
\]
where $H$ is called a Hutchinson operator. Define $H^n$ in a recursive way, say $H^n=H^{n-1}\circ H$ for $n\in\Z$, and $H^0=\id_{\mathcal{Q}(X)}$, where again $\id_{\mathcal{Q}(X)}$ is an identity mapping on $\mathcal{Q}(X)$.

\begin{dfn}
$A\in\mathcal{Q}(X)$ is called an \markdef{attractor} of the iterated function system $(X,F)$, if $H(A)=A$, and there exists a neighborhood $N(A,\varepsilon)\in\mathcal{Q}(X)$ such that $H\big(N(A,\varepsilon)\big)\subseteq N(A,\varepsilon)$ and $\bigcap_{n\in\Z_+}H^n\big(N(A,\varepsilon)\big)=A$.
\end{dfn}

Obviously, if $A$ is an attractor of $(X,F)$, then there exists a neighborhood $N(A,\varepsilon)\in\mathcal{Q}(X)$ such that $\lim_{n\to+\infty}H^n\big(N(A,\varepsilon)\big)=A$. The largest such neighborhood is called the basin of the attractor $A$, denoted as $B(A)$.

\begin{pro}\label{pro3.1}
The mapping $H:\mathcal{Q}(X)\to\mathcal{Q}(X)$ has a unique fixed point $A\in\mathcal{Q}(X)$, \ie\ $H(A)=A$, and hence the iterated function system $(X,F)$ has a unique attractor $A$.
\end{pro}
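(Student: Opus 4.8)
The plan is to prove that $H$ is a contraction on the complete metric space $(\mathcal{Q}(X),d_H)$ and then invoke the Banach fixed point theorem. First I would check that $H$ is well defined on $\mathcal{Q}(X)$: each $f_i(B)$ is the continuous image of a compact set, hence compact and nonempty, and a finite union of such sets is again compact and nonempty, so $H(B)\in\mathcal{Q}(X)$. It is also convenient to rewrite the paper's formula as $d_H(U,V)=\max\{\sup_{u\in U}d(u,V),\ \sup_{v\in V}d(v,U)\}$.

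Next come two metric estimates. For a single $f_i$ with contraction ratio $\lambda_i\in(0,1)$ and any $U,V\in\mathcal{Q}(X)$, each $u\in U$ satisfies $d(f_i(u),f_i(V))=\inf_{v\in V}d(f_i(u),f_i(v))\leq\lambda_i\inf_{v\in V}d(u,v)=\lambda_i d(u,V)$; taking suprema over $u$ and over $v$ and arguing symmetrically gives $d_H(f_i(U),f_i(V))\leq\lambda_i d_H(U,V)$. Second, for families $\{U_i\}_{i\in I_N}$, $\{V_i\}_{i\in I_N}$ in $\mathcal{Q}(X)$ I would show $d_H\big(\bigcup_i U_i,\bigcup_i V_i\big)\leq\max_i d_H(U_i,V_i)$: a point of $\bigcup_i U_i$ lies in some $U_j$, so its distance to $\bigcup_i V_i$ is at most its distance to $V_j$, hence at most $\sup_{u\in U_j}d(u,V_j)\leq\max_i d_H(U_i,V_i)$, and symmetrically. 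With $\lambda:=\max_{i\in I_N}\lambda_i<1$ these combine to $d_H(H(U),H(V))\leq\max_i d_H(f_i(U),f_i(V))\leq\lambda\,d_H(U,V)$, so $H$ is a $\lambda$-contraction.

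Since $(\mathcal{Q}(X),d_H)$ is compact, hence complete, the Banach fixed point theorem yields a unique $A\in\mathcal{Q}(X)$ with $H(A)=A$. To see that $A$ is an attractor, take $N(A,\varepsilon):=\{x\in X:d(x,A)\leq\varepsilon\}$, which lies in $\mathcal{Q}(X)$ because $X$ is compact. From $A\subseteq N(A,\varepsilon)$ we get $d_H(N(A,\varepsilon),A)\leq\varepsilon$, hence $d_H\big(H(N(A,\varepsilon)),A\big)=d_H\big(H(N(A,\varepsilon)),H(A)\big)\leq\lambda\varepsilon<\varepsilon$, so $H(N(A,\varepsilon))\subseteq N(A,\varepsilon)$; iterating, $H^n(N(A,\varepsilon))\subseteq\{x:d(x,A)\leq\lambda^n\varepsilon\}$. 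Conversely $H$ is monotone under inclusion, so $A=H^n(A)\subseteq H^n(N(A,\varepsilon))$ for every $n$. As $A$ is closed, $\bigcap_{n}\{x:d(x,A)\leq\lambda^n\varepsilon\}=A$, and therefore $\bigcap_{n\in\Z_+}H^n(N(A,\varepsilon))=A$. Uniqueness of the attractor is then immediate, since any attractor is in particular a fixed point of $H$.

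I do not anticipate a genuine obstacle; the only steps that need care are the union estimate for the Hausdorff metric and keeping track of which of the two one-sided suprema in $d_H$ is being bounded at each stage of the attractor verification.
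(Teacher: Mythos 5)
Your proposal is correct and follows essentially the same route as the paper: show $H$ is a $\lambda$-contraction on $(\mathcal{Q}(X),d_H)$ with $\lambda=\max_{i\in I_N}\lambda_i$ and apply the Banach fixed point theorem on the complete space $\mathcal{Q}(X)$. You simply supply details the paper leaves implicit, namely the union estimate $d_H\big(\bigcup_i U_i,\bigcup_i V_i\big)\leq\max_i d_H(U_i,V_i)$ and the explicit verification that the fixed point $A$ satisfies the neighborhood conditions in the paper's definition of an attractor, both of which are carried out correctly.
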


\begin{proof}
Since $f_i$ is a contractive function, there exists a $\lambda_i$ such that $d\big(f_i(x),f_i(y)\big)\leq\lambda_i d(x,y)$ for all $x,y\in X$, and any $i\in I_N$. Define $\lambda=\max_{i\in I_N}\lambda_i$, then 
\[
d_H\big(H(U),H(V)\big)\leq\sup_{i\in I_N}d_H\big(f_i(U),f_i(V)\big)\leq\sup_{i\in I_N}\big\{\lambda_i d_H(U,V)\big\}\leq\lambda d_H(U,V),
\]
for all $U,V\in\mathcal{Q}(X)$. So $H$ has a unique fixed point, say $A$, which is also the attractor of $(X,F)$.
\end{proof}

Now consider the space $I_N^\infty:=I_N\times I_N\times\dotsb$, and for any $\mu\in I_N^\infty$, we write $\mu=\seq{\mu_1,\mu_2,\dotsb,\mu_k,\dotsb}$, where $\mu_k\in I_N$ for all $k\in\N$. For any $\mu,\upsilon\in I_N^\infty$, define the Baire metric as $d_B(\mu,\upsilon)=2^{-j}$ where $j$ is the first index when $\mu_k\neq\upsilon_k$. Now we have a compact metric space $(I_N^\infty,d_B)$. For any given set of states $S\subseteq X$, we can define a mapping $C:I_N^\infty\to\mathcal{Q}(X)$ such that
\[
C(\mu,S)=\bigcap_{k\in\N}f_{\mu_k}\circ f_{\mu_{k+1}}\circ\dotsb\circ f_{\mu_\infty}(S),
\]
where $\mu$ defines a trajectory starting from any state $x\in S$ as
\[
\gamma^+(x)=\{f_{\mu_1}(x)\}\cup\{f_{\mu_1}\circ f_{\mu_2}(x)\}\cup\dotsb\cup\{f_{\mu_1}\circ f_{\mu_2}\circ\dotsb\circ f_{\mu_k}(x)\}\cup\dotsb.
\]

If $S\subseteq B(A)$, then for any $\mu\in I_N^\infty$, we have $C(\mu,S)\subseteq A$, and especially we have $C(I_N^\infty,B(A))=A$. So we claim that the attractor of an iterated function system can be achieved by all the possible trajectories determined by $\mu\in I_N^\infty$, in which the transition function is $f_{\mu_k}$ at time $k$, where $\mu_k\in I_N$. 

We can also impose a probability measure on $I_N^\infty$ such that $\mu$ can be realized with some probability density. In a simplified situation, the probability measure over the whole space $I_N^\infty$ could be represented by a stationary discrete probability measure $\pi$ over $I_N$ such that $\sum_{i\in I_N}\pi(i)=1$, where $\pi$ is independent with the time and the state. At any time $k\in\Z$, the transition rule is chosen at random from $F$ according to the probability measure $\pi$ over $I_N$.

\begin{dfn}
Let $(X,F)$ be an iterated function system, where $F=\{f_i:i\in I_N\}$ where $f_i:X\to X$. If $\pi$ is probability measure on $I_N$ such that $\sum_{i\in I_N}\pi(i)=1$, then the triplet $(X,F,\pi)$ is said to be an \markdef{iterated random function system}.
\end{dfn}

We use the random variable $\sigma_k$ to represent the index chosen at time $k$, and the related random transition rule is $f_{\sigma_k}$ at time $k$, so we have the random iteration
\[
Z_{k+1}=f_{\sigma_{k+1}}(Z_k),
\]
where $Z_{k}$ is the random state with the support of $X$ for all $k\in\N$. Being similar with the above notations, $\sigma$ is defined as $\seq{\sigma_1,\sigma_2,\dotsc,\sigma_k,\dotsc}$.

If the initial state is $x$, its random trajectory is written as $\Gamma^+(x)=\bigcup_{k\in\N}\{Z_k\}$, where $Z_1=f_{\sigma_1}(x)$, and $Z_k=f_{\sigma_k}(Z_{k-1})$ for $k\in\N\setminus\{1\}$. Clearly, the stochastic process $(Z_k)_{k\in\N}$ is a Markov chain. Given a state $x\in X$ at time $k$, the next state located in $Y\subseteq X$ has a probability
\[
P(x,Y)=\sum_{i\in I_N}\pi(i)\mathbf{1}_Y(f_i(x)),
\]
where
\[\mathbf{1}_Y(y)=\left\{
\begin{aligned}
1,\quad & \text{if}\ y\in Y\\
0,\quad & \text{if}\ y\notin Y
\end{aligned}
\right..\]

For any Borel subsets $Y$ of $X$, there exists an invariant probability measure $\rho$ such that
\[
\rho(Y)=\int_X P(x,Y)d\rho(x)=\sum_{i\in I_N}\pi(i)\rho(f_i^{-1}(Y)).
\]
Such a probability measure is called the $\pi$-balanced measure on $(X,F,\pi)$, as discussed in Barnsley and Demko \cite{barnsley85}. The support of $\rho$, defined as $R(\rho)=\{x\in X:\rho(x)\neq 0\}$, satisfies $R(\rho)=\bigcup_{i\in I_N} f_i\big(R(\rho)\big)$, and hence $R(\rho)=A$ by Proposition \ref{pro3.1}, which means that the support of a $\pi$-balanced measure on $(X,F,\pi)$ is the unique attractor $A$ of $(X,F)$ for any possible $\pi$.

Therefore, we claim that the attractor in $(X,F)$ can be achieved by a random iteration generated from $(X,F,\pi)$ after a sufficiently long time. For any given state $x\in X$, $\bigcap_{k\in\N}\Gamma^+(Z_k)=A$ with probability $1$, where $Z_k\in\Gamma^+(x)$ for all $k\in\N$, which means that there exists a random trajectory starting from some states in another random trajectory whose initial state is given at the beginning, and this random trajectory will cover the attractor eventually. Or in terms of limit states, the random $\omega$-limit set of any given initial state $x$, say $\Omega(x)$, will be equal to $A$ for sure.

Now consider a special case in which the state space $X$ is a subset of the $n$-dimensional Euclidean space $\E^n$, and $f_i:X\to X$ is affine, say $f_i(x)=D_ix+e_i$, where $x\in X$ is an $n\times 1$ vector, $D_i$ in an $n\times n$ matrix, and $e_i$ is an $n\times 1$ vector, where $i\in I_N$. Suppose again there is a probability measure $\pi$ over $I_N$, then the random iteration is $Z_{k+1}=DZ_k+e$, where $D$ and $e$ depends on the randomly chosen index $\sigma_k\in I_N$ according to $\pi$. We know that $(Z_k)_{k\in\N}$ is a Markov chain, so this random iteration can be equivalently represented as a general stochastic process $Z_{k+1}=D_{k+1}Z_k+e_{k+1}$, where $Z_k$ is the random state with the support of $X$, and $D_k,e_k$ are i.i.d. random matrices.

\begin{exa}\label{exa3.1}
Suppose $X=[0,1]$, $f_a(x)=x/3$, $f_b(x)=x/3+2/3$, $F=\{f_a,f_b\}$, $I_2=\{a,b\}$, and $\pi(a)=\pi(b)=0.5$. This iterated random function system is equivalent with the following autoregressive process
\[
Z_{k+1}=Z_k/3+\varepsilon_{k+1},\ k\in\Z_+,
\]
where $Z_0=x\in[0,1]$ is the given initial state, and $\varepsilon_k$ is an i.i.d. random variable taking $0$ and $2/3$ with equal probability. $(X,F)$ has an attractor
\[
A=\Big\{\sum_{i=1}^\infty x_i/3^i:x_i\in\{0,2\},\ \forall i\in\N\Big\},
\]
which is actually the Cantor ternary set. 

Let $B_n:=\big\{\sum_{i=n}^\infty x_i/3^i:x_i\in\{0,2\},\ \forall i\in\N\big\}$, then $A=B_1$, and $f_a(A)=A/3=B_2=B_2+0/3$, $f_b(A)=A/3+2/3=B_2+2/3$, where $A/3=\{a/3:a\in A\}$, and $B_2+c:=\{b+c:b\in B_2\}$ for $b=0,2/3$, so $f_a(A)\cup f_b(A)=B_1=A$, which confirms that $A$ is the unique attractor of $(X,F)$.

The attractor $A$ is the support of a $\pi$-balanced measure on $(X,F,\pi)$, and thus it can be achieved by a random trajectory $\Gamma^+(x)$ starting from any $x\in[0,1]$. The random trajectory can be represented by the associated autoregressive process. 

If the initial state $x\in A$, then we have $\Gamma^+(x)=A$ for sure. If $x\in[0,1]\setminus A$, then there exists $x_1,x_2,\dotsc,x_m\in\{0,2\}$ such that $x=\sum_{i=1}^m x_i/3^i+m(x)$, where $m(x)\leq 1/3^m$. If $m$ is sufficiently large, then $m(x)$ is close enough to $0$, while if $m\to+\infty$, $m(x)\to 0$. $Z_1=x+\varepsilon_1=y+m(x)/3$, where $y=\sum_{i=1}^m x_i/3^i+\varepsilon_1\in A$, and generally, $Z_{k}=z+m(x)/3^k$, where again $z\in A$, so after a finite period of time, say $l$, we will have $Z_l\in A$, and hence we have $\Gamma^+(Z_l)=A$ with probability $1$. Thus for any $x\in[0,1]$, this stochastic process will cover $A$ at last, say $\Omega(x)=\bigcap_{k\in\N}\Gamma^+(Z_k)=A$, where $Z_k\in\Gamma^+(x)$.
\qed
\end{exa}

\subsection{Strange Attractors}

The concept of strange attractors first appeared in a paper on turbulence by Ruelle and Takens \cite{ruelle71}, but was not defined precisely. A definition of strange attractors later was given in a popular article by Ruelle \cite{ruelle80}, but according to that definition, the attractors of any chaotic dynamical systems are strange. We want to restrict this definition by stating that an attractor is strange if the dynamical system is chaotic, and it should a fractal. The fractals are defined by Mandelbrot as

\begin{dfn}
A set is said to be a \markdef{fractal}, if its Hausdorff-Besicovitch dimension strictly exceeds its topological dimension.
\end{dfn}

Let $(X,d)$ be a metric space as usual, and $Z\subseteq X$, the topological dimension of $Z$ denoted as $\dim_L(Z)$ is an integer belongs to $\Z_+\cup\{-1,+\infty\}$. If $Z=\emptyset$, then $\dim_L(Z)=-1$. If $Z$ is a discrete set, then $\dim_L(Z)=0$. In general, $\dim_L(Z)=n$, if for any open cover of $Z$, any state $z\in Z$ can be contained in $n+1$ open sets of that cover at most. If there does not exist such a finite $n$, then $\dim_L(Z)=+\infty$.

For any $d\geq 0$, we can define the $\delta$-approximate $d$-dimensional Hausdorff measure of $Z$ as
\[
\mathcal{H}_\delta^d(Z)=\inf\Big\{\sum_{i\in I}\big(\text{diam}(U_i)\big)^d:\bigcup_{i\in I}U_i\supseteq Z,\ \text{diam}(U_i)\leq\delta\Big\},
\]
where $\text{diam}(U_i):=\sup\{d(z_1,z_2):z_1,z_2\in U_i\}$, $I$ is a countable index set, and $\{U_i\}_{i\in I}$ is an open $\delta$-cover of $Z$. The $d$-dimensional Hausdorff measure of $Z$ is defined as $\mathcal{H}^d(Z)=\lim_{\delta\to 0}\mathcal{H}_\delta^d(Z)$. Then the Hausdorff-Besicovitch dimension of $Z$ is defined as $\dim_H(Z)$ such that
\[
\dim_H(Z)=\inf\{d\in\R_+:\mathcal{H}^d(Z)=0\}=\sup\{d\in\R_+:\mathcal{H}^d(Z)=\infty\}.
\]

In practice, when we need to determine the dimension of a fractal, the more useful notion is the box-counting dimension. The upper and lower box-counting dimensions of $Z\subseteq X$ are defined as
\[
\overline{\dim}_B(Z)=\limsup_{\varepsilon\to 0}\frac{\log N(Z,\varepsilon)}{\log(1/\varepsilon)},\ \text{and}\ \underline{\dim}_B(Z)=\liminf_{\varepsilon\to 0}\frac{\log N(Z,\varepsilon)}{\log(1/\varepsilon)},
\]
where $N(Z,\varepsilon)$ in the minimal number of sets as required for an open $\varepsilon$-cover of $Z$. As we suppose $N(Z,\varepsilon)\sim \lambda (1/\varepsilon)^d$ by extending the observations of regular cases in the Euclidean space, we have $\log N(Z,\varepsilon)=\log\lambda+d\log(1/\varepsilon)$, where $\lambda$ is a positive constant. That's why we define the box-counting dimension by the limit of the ration between $\log N(Z,\varepsilon)$ and $\log(1/\varepsilon)$.

Once we have $\overline{\dim}_B(Z)=\underline{\dim}_B(Z)$ for a set of states $Z$, we say there exists the box-counting dimension, which is denoted as
\[
\dim_B(Z)=\overline{\dim}_B(Z)=\underline{\dim}_B(Z).
\]

In general, we have $\dim_H(Z)\leq\underline{\dim}_B(Z)\leq\overline{\dim}_B(Z)$, and $\dim_H(Z)\leq\dim_B(Z)$ if $\underline{\dim}_B(Z)=\overline{\dim}_B(Z)$. For the unique attractor $A$ of an iterated function system $(X,F)$, if $X\subseteq\E^n$ (or in some other special settings), we have $\dim_H(A)=\dim_B(A)$, and hence we can calculate the box-counting dimension of $A$ to get exactly its Hausdorff-Besicovitch dimension. If $A$ is strange, we have $\dim_H(A)=\dim_B(A)\geq\dim_L(A)$.

Consider an iterated function system $(X,F)$ has a set of affine functions with similarities $f_i:\R^n\to\R^n$ for all $i\in I_N$. We know $\bigcup_{i\in I_N}f_i(A)=A$, where $A$ is the attractor of $(X,F)$. If we assume $f_i(A)$'s are not overlapping, which means that there exists an open subset $D$ of $A$ such that $f_i(D)\cap f_j(D)=\emptyset$ for all $i,j\in I_N$ with $i\neq j$, and $\bigcup_{i\in I_N}f_i(D)\subseteq D$, then the similarity dimension of $A$, which is equal to $\dim_B(A)$ and $\dim_H(A)$, is the unique solution of the equation
\[
\sum_{i\in I_N}\lambda_i^d=1,
\]
where $\lambda_i<1$ is the Lipschitz constant of the function $f_i$ for all $i\in I_N$. 

The intuitive proof of the above equation is clear. Suppose $N(A,\varepsilon)$ is again the minimal number of sets required to fully $\varepsilon$-cover $A$, then $N(A,\varepsilon)\sim\lambda(1/\varepsilon)^d$. Notice that $A=\bigcup_{i\in I_N}f_i(A)$, and $f_i(A)$'s are not overlapping, so we have $N(A,\varepsilon)=\sum_{i\in I_N}N\big(f_i(A),\varepsilon\big)$. But $N\big(f_i(A),\varepsilon\big)=N(A,\varepsilon/\lambda_i)$ as $A$ is scaled by the Lipschitz constant $\lambda_i$ of $f_i(A)$. Then $N(A,\varepsilon)=\sum_{i\in I_N}N(A,\varepsilon/\lambda_i)$, and hence $(1/\varepsilon)^d=\sum_{i\in I_N}(\lambda_i/\varepsilon)^d$ when $\varepsilon\to 0$, so we get $\sum_{i\in I_N}\lambda_i^d=1$.

In the following paragraphs, we will consider some examples. At first we want to revisit the Cantor set that has been discussed in Example \ref{exa3.1}.

\begin{dfn}
Let $D_i=\{0,1\}$ for all $i\in\N$ with the discrete topology, and write $\{0,1\}^\omega=\prod_{i\in\N}D_i$, then any set homomorphic to $\{0,1\}^\omega$ is called a \markdef{Cantor set}.
\end{dfn}

\begin{exa}
Let $X=[0,1]$, and define a real number $c\in(2,+\infty)$. We remove the interval $(1/c,1-1/c)$ from $[0,1]$ having
\[
C_1=[0,1/c]\cup[1-1/c,1],
\]
and we delete the $(1-2/c)$ ratio of middle parts in $[0,1/c]$ and $[1-1/c,1]$ respectively obtaining
\[
C_2=[0,1/c^2]\cup[1/c-1/c^2,1/c]\cup[1-1/c,1-1/c+1/c^2]\cup[1-1/c^2,1].
\]
In general, we have $C_{k+1}=C_k/c\cup \big(C_k/c+(1-1/c)\big)$. When $k\to+\infty$, we have the Cantor set
\[
C=\Big\{\sum_{i\in\N}x_i/c^i:x_i\in\{0,c-1\},\ \forall i\in\N\Big\},
\]
which is the attractor of $(X,F)$ with $f_a(x)=x/c$ and $f_b(x)=x/c+(1-1/c)$.

If taking $\varepsilon=1/c^k$, we have $N(C,\varepsilon)=2^k$ as we need $2^k$ open sets to $\varepsilon$-cover $C$ with $2^k$ disconnected intervals. By the definition of the box-counting dimension, we have $\dim_H(C)=\dim_B(C)=\lim_{k\to\infty}\log 2^k/\log c^k=\log 2/\log c$.

Alternatively, since $C$ is the attractor of an affine function systems with similarities, and we have their scaling numbers $\lambda_a=\lambda_b=1/c$, so $\dim_B(C)$ is the unique solution of the equation $(1/c)^d+(1/c)^d=1$. Thus $\dim_B(C)=\log_c 2=\log2/\log c$. Since $c\in(2,+\infty)$, $\dim_B(C)\in(0,1)$. In the case of Cantor ternary set in Example \ref{exa3.1}, where $c=3$, its dimension is $\dim_H(A)=\dim_B(A)=\log 2/\log 3$.
\qed
\end{exa}

\begin{exa}
Let $X\subseteq\C$ is a triangle with vertices $0$, $a$, and $1$, where $a$ is a complex number. Consider an iterated function system $(X,F)$, where $F=\{f_1,f_2\}$ with $f_1(z)=a\overline{z}$, and $f_2(z)=(1-a)\overline{z}+a$, where $a$ satisfies $\vert a\vert<1$ and $\vert 1-a\vert <1$, and $\overline{z}$ is the complex conjugate of $z$. (This case was proposed by De Rham \cite{derham57}.)

If $a=1/2+i\sqrt{3}/6$, then $X$ is equilateral, and the attractor of this iterated function system is the Koch curve $K$, which can be constructed by replacing the middle $1/3$ parts of each line segment of $X$ with a new equilateral triangle, and doing this process with iteration. In this case, $\vert a\vert=\vert 1-a\vert=1/\sqrt{3}$, so their scaling numbers are $\lambda_1=\lambda_2=1/\sqrt{3}$, and its similarity dimension is the unique solution of the equation $2(1/\sqrt{3})^d=1$. Thus $\dim_H(K)=\dim_B(K)=\log4/\log 3$.

If $\vert a-1/2\vert=1/2$, \eg\ $a=1/2+i/2$, then the attractor of $(X,F)$ is the Peano curve $P$, which fully fill the space $X$. We have $\lambda_1=\lambda_2=1/\sqrt{2}$, and solve the equation $2(1/\sqrt{2})^d=1$, we obtain $\dim_H(P)=\dim_B(P)=2$.
\qed
\end{exa}

\section{Random Utility Functions}

We want to consider an agent $w$ in a large population $W$, who has a rational preference relation $\succsim$, but can be also affected by some psychological factors when making her choices. Concretely, we assume the decision state space is $X$, and the preference relation is a weak order $\succsim$ on $X$ such that
\begin{enumerate}
\item either $x\succsim y$ or $y\succsim x$ for all $x,y\in X$,
\item $x\succsim y$ and $y\succsim z$ implies $x\succsim z$ for all $x,y,z\in X$.
\end{enumerate}

If $x\succsim y$ and not $y\succsim x$, we say $x\succ y$. Then we have the following preference representation result.

\begin{cla}
If $X$ is countable, or $X$ is uncountably infinite but $\succsim$ is dense on $X$, which means that for any $Z\subset X$, when $x,y\in X\setminus Z$ such that $x\succ y$, there is a $z\in Z$ satisfying $x\succ z$ and $z\succ y$, then there exists a utility function $u:X\to\R$ such that $x\succsim y$ if and only if $u(x)\geq u(y)$ for all $x,y\in X$.
\end{cla}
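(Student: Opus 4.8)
The plan is to reduce to the case of a strict total order and then build the utility separately for the two cases of the hypothesis, in the uncountable case transporting an explicit construction from a countable subset. First I would quotient out indifference: putting $x\sim y$ iff $x\succsim y$ and $y\succsim x$, axioms (i)--(ii) make $\sim$ an equivalence relation, and $\succ$ descends to a strict total order $\preceq$ on $X/\!\sim$ for which a representing utility pulls back to one for $\succsim$. The quotient of a countable set is countable, and the image of a countable order-dense set is countable and order-dense, so it suffices to construct $u$ assuming $\succsim$ is already a strict total order; I keep writing $X$ and $\prec,\preceq$.

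Countable case. Enumerate $X=\{x_1,x_2,\dots\}$ and set $u(x)=\sum_{n:\,x_n\prec x}2^{-n}$, a series dominated by $\sum_n 2^{-n}$, so $u$ is well defined with values in $[0,1]$. If $x\prec y$ then $\{n:x_n\prec x\}\subseteq\{n:x_n\prec y\}$, and writing $x=x_m$ the index $m$ lies in the right set but not the left, whence $u(y)-u(x)\ge 2^{-m}>0$; thus $u$ represents $\prec$. I would record the quantitative refinement used below: if an element $z$ is $x_m$ in the enumeration, then $u(y)\ge u(z)+2^{-m}$ for every $y$ with $y\succ z$.

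Uncountable case. Let $Z$ be the countable order-dense subset from the hypothesis. Call a pair $a\prec b$ a jump if nothing lies strictly between them; applying the density hypothesis to $a,b$ shows every jump has an endpoint in $Z$ (else there would be a point of $Z$ between them), and each element of $Z$ is an endpoint of at most two jumps, so there are at most countably many jumps. Let $Z^*$ be $Z$ together with both endpoints of every jump, still countable; apply the countable construction to $Z^*$ to get $v\colon Z^*\to[0,1]$, and extend it to $X$ by the midpoint rule
\[
u(x)=\tfrac12\Big(\sup\{v(z):z\in Z^*,\ z\preceq x\}+\inf\{v(z):z\in Z^*,\ z\succeq x\}\Big),
\]
with $\sup\emptyset:=-1$ and $\inf\emptyset:=2$ (density forces at most one $x$ with no element of $Z^*$ at or below it, and likewise at or above it). One checks $u=v$ on $Z^*$. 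To verify that $u$ represents $\prec$, take $x\prec y$ and split on whether $x,y$ lie in $Z^*$: if both do, this is the countable case; otherwise one exhibits $z_0\in Z^*$ with $x\prec z_0\prec y$ — directly from the density hypothesis when neither endpoint lies in $Z$, and from the jump-completion otherwise (no such $z_0$ would make $\{x,y\}$ a jump, whose endpoints were adjoined to $Z^*$) — and then the refinement above gives $\inf\{v(z):z\succeq y\}\ge v(z_0)+2^{-m}$ while $\sup\{v(z):z\preceq x\}\le v(z_0)$, so $u(y)>v(z_0)\ge u(x)$; the sub-case where exactly one of $x,y$ lies in $Z^*$ is a variant of the same estimate.

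The step I expect to be the crux is the strictness of this extension across jumps. A one-sided extension such as $u(x)=\sup\{v(z):z\preceq x\}$ typically collapses the two sides of a jump onto the same value, so one must both enlarge $Z$ to $Z^*$ (which is what forces the counting argument bounding the number of jumps) and use the midpoint form of the rule, since that is precisely what turns a single separating point $z_0$ into a strict inequality. The remaining points — convergence of the series, the descent to the quotient, the boundary conventions — are routine bookkeeping.
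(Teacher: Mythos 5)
The paper states this Claim without proof (it is the classical Cantor--Debreu representation theorem), so there is no argument of the paper's to compare yours against; I can only assess your proof on its own terms, and it is correct. The reduction to a strict total order via the indifference quotient, the utility $\sum_{n:\,x_n\prec x}2^{-n}$ in the countable case, and the jump-completion $Z^*$ followed by the midpoint extension in the uncountable case together constitute the standard complete argument; in particular you correctly identified that the naive one-sided $\sup$ extension collapses across jumps, which is the step most commonly botched. Two remarks. First, your parenthetical ``no such $z_0$ would make $\{x,y\}$ a jump'' is compressed: to conclude that the absence of a $Z^*$-point strictly between $x\prec y$ (with not both in $Z^*$) forces a jump, you must also rule out intermediate points of $X\setminus Z^*$, which you do by applying density to the pairs $(x,w)$ and $(w,y)$; this works but deserves a line. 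Second, and more importantly, you have silently repaired the statement: the paper's definition of density quantifies over \emph{all} $Z\subset X$ (which is vacuous or contradictory as written) and never requires $Z$ to be countable, whereas your proof correctly uses a single \emph{countable} order-dense $Z$. That repair is necessary --- without countability the claim is false, as the lexicographic order on $[0,1]^2$ shows --- so you should state explicitly that you are proving the corrected hypothesis.
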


Let $\mathcal{P}(X)$ be the power set of $X$. A function $C:\mathcal{P}(X)\to\mathcal{P}(X)$ is called a choice function, such that for any $Z\in\mathcal{P}(X)$ we have $C(Z)\neq\emptyset$ and $C(Z)\subseteq Z$. Now consider an arbitrary subset of $X$, say $Y\in\mathcal{P}(X)$, if $y\in C(Y)$, then we should have $u(y)\geq u(z)$ for all $z\in Y$, and if $u(y)<u(z)$, then $z\notin Y$.

When we observe the agent $w$ chooses $y$ from a set of states $Y$, we claim that $u(y)\geq u(z)$ for any $z\in Y$ is logically true. However, from the view of statistics, this inference is true only in a certain confidence level. So behind our observations, the utility of the agent $w$ and also her preference is random for us, in the sense that the true utility values are adapted by noises. Such a problem was first considered by Manski \cite{manski75}, in which he introduced the method of maximum score to estimate the stochastic utility model of choice from the observation data.

Define $v(x)$ is the observed utility when the agent chooses $x$, then we have $u(x)=v(x)+\varepsilon(x)$, where $\varepsilon(x)$ is the introduced noise related to $x$. If $v(x)$ is assumed to have a representation of linear form of $m$ observable variables, say $J(x)\in\R^m$, then $v(x)=J(x)'\beta$, where $\beta\in\R^m$ is the parameter to be estimated. We have
\[
u(x)=J(x)'\beta+\varepsilon(x),
\]
where the noises $\varepsilon(x)$ are i.i.d. across different alternatives $x$. The estimation $\hat{\beta}$ is a function of the observed data $J(x)$, where $x$ belongs to a given sample, say $S\subseteq X$.

Another way to deal with the difference between the utility associated with the observed choices and the true utilities is to think that the utility function of $w$ is random for her, but the random utility function should have a fundamental unchanged part at any time. So $w$'s knowledge of her own preference $\succsim$ is probabilistic, and she uses a random utility function to make choices. The assumption that all the possible utility functions have a common basic parts is set to capture the fact that she is not totally ambiguous with her own preference, but her knowledge is in a state between ignorance and certainty.

Formally, suppose that there is a finite index set $I_N$ with $N\geq 2$ elements, and the real-valued function $u_i:X\to\R$ is a possible utility function for any $i\in I_N$. Define a function $k:X\to\R$, we call it a fundamental utility function for the agent $w$, if there exists a contractive function $f_i:X\to X$ such that $u_i=k\circ f_i$ for all $i\in I_N$.

So we have $u_i(x)=f_i(k(x))$ for all $i\in I_N$. Denote the image of the fundamental utility function $k(x)$ on the domain $X$ as $k(X)\subseteq\R$. Then the random utility functions $\{u_i:i\in I_N\}$ on $X$ is equivalent with the function system $\{f_i:i\in I_N\}$ on $k(X)$. 

Now we introduce the time, and suppose the agent $w$ makes her choices along the time domain $\Z_+$. At time $t=0$, the fundamental utility function is $k_0(x)=k(x)$ for any $x\in X$. At time $t=1$, the fundamental utility function becomes $k_1(x)=f_i(k(x))$, which is also the utility function at time $t=0$. In general, at time $t=n$, the fundamental utility function is $k_n(x)=f_i(k_{n-1}(x))$ for any $i\in I_N$ and all $n\in\N$. This dynamics of fundamental utility functions can be thought of as a reasonable extension of the discounted utility function in the framework of a stationary time preference $(\succsim_t)_{t\in\Z_+}$, where $\succsim_t=\succsim$ for all $t\in\Z_+$ in our settings.

This construction can help to generate an iterated function system $(k(X),F)$, where $F=\{f_i:i\in I_N\}$ with $f_i:k(X)\to k(X)$, and $k:X\to\R$ is a given function. Since $u_i(x)$ at time $n$ is the fundamental utility function $k_{n+1}(x)$ at next time $n+1$, the iterated function system $(k(X),F)$ produces the values of both the fundamental utility and the actual utility. By the proposition \ref{pro3.1}, there exists an attractor $A\subseteq k(X)$ such that $A=\bigcup_{i\in I_N}f_i(A)$. Therefore, there exist some stable utilities under the iteration driven by $F$.

If there is a probability measure $\pi$ on $I_N$ such that $\sum_{i\in I_N}\pi(i)=1$, the iterated function system $(k(X),F)$ becomes an iterated random function system $(k(X),F,\pi)$, in which we have the following random iterated process
\[
U_{n}=f_{\sigma_{n}}(U_{n-1}),\ \text{for\ all}\ n\in\Z_+,
\]
where $U_n$ is the random utility at time $n$, $\sigma_n$ is the random index chosen from $I_N$ at time $n$ for all $n\in\Z_+$, and $U_{-1}=k(x)$. Denote $K_n$ as the random fundamental utility at time $n$, since $K_{n+1}=U_n$, we have $K_{n+1}=f_{\sigma_n}(K_n)$ for all $n\in\Z_+$ with $K_0=k(x)$.

Suppose $f_i(x)=\rho_i x$, where $0<\rho_i<1$ for all $i\in I_N$, and $\rho_i\neq\rho_j$ for any $i\neq j$, the iterated stochastic process is
\[
U_n=\xi_n U_{n-1},\ \text{for\ all}\ n\in\Z_+,
\]
where again $U_n$ is the random utility at time $n$, and $\xi_n$ is an i.i.d. random variable with the support $\{\rho_i:i\in I_N\}$ for all $n\in\Z_+$. For any state $x\in X$, at time $n\in\Z_+$, the random utility is
\[
U_n=\Big(\prod_{i=0}^n\xi_i\Big)k(x)=\exp\Big(\sum_{i=0}^n\log\xi_i\Big)k(x)=\exp\Big(-\sum_{i=0}^n\log(1/\xi_i)\Big)k(x),
\]
where $U_n\to 0$ almost surely, when $n\to+\infty$.

If $f_i(x)=\rho x+r_i$, where $0<\rho<1$ and $r_i>0$ for all $i\in I_N$, with $r_i\neq r_j$ for any $i\neq j$, the iterated stochastic process is
\[
U_n=\rho U_{n-1}+\varepsilon_n,\ \text{for\ all}\ n\in\Z_+,
\]
where $\varepsilon_n$ is again an i.i.d. random variable taking values from the set $\{r_i:i\in I_N\}$ for all $n\in\Z_+$. At time $n\in\Z_+$, the random utility is
\[
U_n=\rho^nk(x)+\sum_{i=0}^n\rho^{n-i}\varepsilon_i,
\]
in which $\rho^nk(x)$ vanishes when $n\to+\infty$, however the second part does not converge almost surely, simply as the new randomness $\varepsilon_n$ will enter into $U_n$ at each time $n$. Roughly speaking, in this iterated affine function system $(k(X),F)$, the attractor will be a fractal for most possible values of the parameters $\rho$ and $r_i$'s.

\section{Optimal Growth under Uncertainty}

In this section, we want to consider the nature of the steady state in a stochastic growth model. Consider an economy with a gross production function $Y=F(K,L)$, where $Y$ is the total production in the economy, $K$ is the capital input, and $L$ is the labor supply. Suppose this gross production function is a homogeneous function of degree $1$, then we have $Y/L=F(K/L,1)=f(K/L)$. Define $y=Y/L$ and $k=K/L$, we have the production function for a representative agent $w$ in that economy $y=f(k)$, which satisfies the classical hypothesis $f(0)=0$, and $f'(k)>0$, $f''(k)<0$, for all $k>0$, and the Inada conditions of $\lim_{k\to 0^+}f'(k)=+\infty$, and $\lim_{k\to+\infty}f'(k)=0$.

Suppose there is a stochastic factor $\xi$ can affect the production, we write the new production function as $y=f(k,\xi)$. In general, the shock $\xi$ can be either multiplicative or additive with $f(k)$, similar with the studies carried by Mitra, Montrucchio, and Privileggi \cite{mitra04}, we consider the case of multiplicative shocks. So we can write $y=f(k,\xi)=\xi f(k)$, where $\xi>0$ is a random shock. Suppose $\xi$ can take values in the set $\{\lambda_i:i\in I_N\}$, where $I_N$ is a finite index set with $N\geq 2$, with the probability measure $\pi$ on $I_N$ such that $\sum_{i\in I_N}\pi(i)=1$, and $\pi(i)>0$ for all $i\in I_N$.

The sustainability of this economy is possible by introducing the consumption $C$ and the investment $I$, both of whom have the same measurement with the production $Y$. Again we consider the consumption and the investment of the representative agent $w$, and hence we write her consumption as $c=C/L$, and her investment as $i=I/L$. 

After introducing the time domain $\Z_+$, we have the set of variables $\{y_n,k_n,i_n,c_n,\xi_n\}$ at each time $n\in\Z_+$. In this economy without the government, we have the following system
\[\left\{
\begin{aligned}
& y_n = c_n+i_n\\
& i_n = k_{n+1}\\
& y_{n} = \xi_{n}f(k_{n})
\end{aligned}
\right.,\]
where at time $t=0$, $k_0=i_{-1}$ is the initial capital, which produce $y_0=\xi_0f(k_0)$, and $y_0$ will be divided into the consumption $c_0$ and the investment $i_0=k_1$. In general, at time $t=n$, $k_n=i_{n-1}$ is the capital, and the production is $y_n=\xi_nf(k_n)$, which is the source of the consumption $c_n$ and the next time capital $k_{n+1}=i_n$, where $\xi_n$ is the random shock $\xi$ for any $n\in\Z_+$, so the process $(\xi_n)_n$ is i.i.d. on the time domain $\Z_+$.

Since $i_n$ and $k_{n+1}$ are related by $i_n=k_{n+1}$, and $c_n$ and $i_n$ are related by $c_n=y_n-i_n$ given $y_n$, we actually have the following three equivalent systems, which have the sets of variables $\{y_n,k_{n+1},\xi_n\}$, $\{y_n,c_n,\xi_n\}$, and $\{y_n,i_n,\xi_n\}$ respectively.

Suppose $w$ has a stationary utility function $u(c)$ with the properties that $u'(c)>0$, $u''(c)<0$, and $\lim_{c\to 0^+}u'(c)=+\infty$, which means $c_n>0$ at each time $n\in\Z_+$. $w$'s time preference is captured by a discounting factor $\rho\in(0,1)$, so her utility along $\Z_+$ can be written as $\sum_{n\in\Z_+}\rho^n u(c_n)$, if given her consumption flow $(c_0,c_1,\dotsc,c_n,\dotsc)$.

The steady states of this economic system are the solutions for the following dynamic optimization problem
\[
\begin{aligned}
\max\ & \mathbb{E}_0\sum_{n\in\Z_+}\rho^n u(c_n)\\
\textit{s.t.}\ & k_{n+1}=\xi_nf(k_n)-c_n\\
& k_0>0\ \text{is\ given}
\end{aligned}\quad ,
\]
where $\xi_n$ is an i.i.d. random variable with a finite support $\{\lambda_i:i\in I_N\}$ and a probability measure $\pi$, such that $\sum_{i\in I_N}\pi(i)=1$.

The optimal consumption flow of $w$ is captured by the following Euler equation
\[
u'(c_n)=\mathbb{E}_t\big(\rho\xi_{n+1}f'(k_{n+1})u'(c_{n+1})\big).
\]
Noticing that $k_{n+1}=y_n-c_n$, we can rewrite the above equation as
\[
u'(c_n)=\rho f'(y_n-c_n)\mathbb{E}_t\big(\xi_{n+1}u'(c_{n+1})\big).
\]
So for the optimal steady state along time, there exists a function $\varphi:\R\to\R$ such that $c_n=\varphi(y_n)$, and hence $k_{n+1}=y_n-\varphi(y_n)$. Therefore, the optimal state of the production in this economy has the following dynamics
\[
y_{n+1}=\xi_{n+1}f(k_{n+1})=\xi_{n+1}f\big(y_n-\varphi(y_n)\big)=\xi_{n+1}\psi(y_n),
\]
where $\psi(y_n)=f\big(y_n-\varphi(y_n)\big)$, and $\xi_n$ is chosen randomly from $\{\lambda_i:i\in I_N\}$ according to the probability measure $\pi$ on $I_N$.

Suppose there exists an invariant support set (related with the $\pi$-balanced measure for this Markov chain) for the stochastic process $y_{n+1}=\xi_{n+1}\psi(y_n)$, say an compact interval $X_Y$ located within $\R_+$. Define $g_i(y):=\lambda_i\psi(y)$ for all $i\in I_N$. Then we have an iterated random function system $(X_Y,G,\pi)$, where $G=\{g_i:i\in I_N\}$ with $g_i:X_Y\to X_Y$ for all $i\in I_N$, and $\pi$ is a probability measure on $I_N$.

We also have the dynamics of the optimal states of the capital in this economy
\[
k_{n+1}=y_n-\varphi(y_n)=\xi_nf(k_n)-\varphi\big(\xi_nf(k_n)\big)=\phi(k_n,\xi_n),
\]
where $\phi(k_n,\xi_n)=\xi_nf(k_n)-\varphi\big(\xi_nf(k_n)\big)$. Define $h_i(k):=\phi(k,\lambda_i)$ for all $i\in I_N$. Let the invariant support interval for $k_{n+1}=\phi(k_n,\xi_n)$ be $X_K\times\{\lambda_i:i\in I_N\}$. We have an iterated function system $(X_K,H)$, where $H=\{h_i:i\in I_N\}$ with $h_i:X_K\to X_k$ for all $i\in I_N$.

Consider a simple case, where $f(k)=\sqrt[3]{k}$, and $u(c)=\log c$. Suppose the shock $\xi_n$ is an i.i.d. Bernoulli process, such that $\xi_n=\lambda_a$ with probability $\pi(a)=q$ and $\xi_n=\lambda_b$ with probability $\pi(b)=1-q$, here $I_N=\{a,b\}$. We assume $1/\lambda_b>\lambda_a>1>\lambda_b>0$, so the index $a$ stands for a positive shock, and the $b$ stands for a negative shock, but the negative shock will not make the economy vanish, and at the same time the positive shock will not make it too expansive.

Then at the steady states of its optimal growth, we have $c_n=\varphi(y_n)=(1-\rho/3)y_n$, so $k_{n+1}=y_n-c_n=\rho y_n/3$. We know $y_n=\xi_n f(k_n)=\xi_n\sqrt[3]{k_n}$, so $k_{n+1}=\xi_n\rho\sqrt[3]{k_n}/3$. Define $\kappa_n=\log k_n$, then we have
\[
\kappa_{n+1}=\kappa_n/3+(\log\xi_n+\log\rho-\log 3).
\]
So when $\xi_n$ takes the value $\lambda_a$ or $\lambda_b$, we have two affine functions, namely $l_a(\kappa)=\kappa/3+(\log\lambda_a+\log\rho-\log3)$ and $l_b(\kappa)=\kappa/3+(\log\lambda_b+\log\rho-\log3)$. 

Suppose the invariant support interval of $l_a(\kappa)$ and $l_b(\kappa)$ is $[\alpha,\beta]$, then we have $\beta/3+(\log\lambda_a+\log\rho-\log3)=\beta$, and $\alpha/3+(\log\lambda_b+\log\rho-\log3)=\alpha$. After some calculations, we obtain $\alpha=3(\log\lambda_b+\log\rho-\log3)/2$, and $\beta=3(\log\lambda_a+\log\rho-\log3)/2$. So there exists an iterated function system $([\alpha,\beta],L)$, where $L=\{l_a,l_b\}$ with $l_i:[\alpha,\beta]\to[\alpha,\beta]$ for $i=a,b$. 

The difference between these two $l(\kappa)$-intercepts are $\log\lambda_a-\log\lambda_b=\log(\lambda_a/\lambda_b)$, and $\beta-\alpha=3(\log\lambda_a-\log\lambda_b)/2=3\log(\lambda_a/\lambda_b)/2$. So after some transformations, the function system $l_a(\kappa)$ and $l_b(\kappa)$ is equivalent with the pair of functions $q_a(\tau)=\tau/3$ and $q_b(\tau)=\tau/3+2/3$, where $\tau=\kappa/(\beta-\alpha)+d$, where $d$ is a constant. So the iterated function system $([\alpha,\beta],L)$ is equivalent with the iterated function system $([0,1],Q)$, where $Q=\{q_a,q_b\}$ with $q_i:[0,1]\to[0,1]$ for $i=a,b$. From Example \ref{exa3.1}, we know the attractor of $([0,1],Q)$ is the Cantor ternary set, so the attractor of $([\alpha,\beta],L)$ is also a Cantor set, and hence the dynamics of the steady states of $\kappa_t$ and also $k_t=\exp(\kappa_t)$ is in a chaotic situation. 

\section{Summary}
In this work, we study the random utility function and the optimal stochastic growth based on the formal developed theory of iterated function systems. Since the real economic world is very complex, we have to refine again and again our model settings and theoretical hypothesis to have a considerably thinkable framework. However, I wish these frameworks were able to capture some basic natures of the economic phenomena.

\medskip

\end{document}